\newcommand{\vars}{\mathit{vars}}
\newcommand{\dom}{\mathit{dom}}
\newcommand{\pred}{\mathit{def}}
\newcommand{\gen}{\ensuremath{\mathit{gen}}}
\newcommand{\compatible}{\mathit{comp}}
\newcommand{\enforce}{\triangleleft}
\newcommand{\img}{\mathit{img}}
\newcommand{\mcg}{\mathit{mcg}}
\title{Anti-unification in Constraint Logic Programming}
\author[G. Yernaux and W. Vanhoof]
{GONZAGUE YERNAUX and WIM VANHOOF\\
	University of Namur, Belgium\\
	Namur Digital Institute\\
	\email{gonzague.yernaux@unamur.be}}
\newtheorem{definition}{Definition}
\newtheorem{example}{Example}
\newtheorem{proposition}{Proposition}
\newtheorem{theorem}{Theorem}
\begin{document}         
	
	%\nocite{*}
	\maketitle 
	
	%%%%%%%%%%%%%%%%%%%%%%%%%%%%%%%%%%%%%%%%%%%%%%%%%%%
	%%%%%%%%%%%%%%%%%%%%%%%%%%%%%%%%%%%%%%%% ABSTRACT
	%%%%%%%%%%%%%%%%%%%%%%%%%%%%%%%%%%%%%%%%%%%%%%%%%%%
	\begin{abstract}
		Anti-unification refers to the process of generalizing two (or more) goals into a single, more general, goal that captures some of the structure that is common to all initial goals. In general one is typically interested in computing what is often called a most specific generalization, that is a generalization that captures a maximal amount of shared structure. 
		In this work we address the problem of anti-unification in CLP, where goals can be seen as unordered sets of atoms and/or constraints. We show that while the concept of a most specific generalization can easily be defined in this context, computing it becomes an NP-complete problem. We subsequently introduce a generalization algorithm that computes a well-defined abstraction whose computation can be bound to a polynomial execution time. Initial experiments show that even a naive implementation of our algorithm produces acceptable generalizations in an efficient way. Under consideration for acceptance in TPLP.
	\end{abstract}
	
	\begin{keywords}
		Anti-unification, (most specific) generalization, CLP, program analysis
	\end{keywords}
	
	%%%%%%%%%%%%%%%%%%%%%%%%%%%%%%%%%%%%%%%%%%%%%%%%%%%
	%%%%%%%%%%%%%%%%%%%%%%%%%%%%%%%%%%%%%%%% Intro
	%%%%%%%%%%%%%%%%%%%%%%%%%%%%%%%%%%%%%%%%%%%%%%%%%%%
	\section{Introduction and motivation}
	Anti-unification refers to the process of computing for a given set of symbolic expressions $S$, a so-called \textit{generalization} of $S$, that is a single expression that captures some of the common structure that is shared by all elements in $S$. For instance, in a logic programming context, the atom $p(a,Y,f(X))$ can be seen as a generalization of the set of atoms \[\{p(a,a,f(a)), p(a,b,f(g(c))), p(a,A,f(a))\}\] as each of these atoms is an instance of $p(a,Y,f(X)$. 
	Often, one is interested in what is called a \textit{most specific} or, equivalently, a least general generalization. That is, a generalization that preserves a maximal amount of common structure. In the example above, $p(a,Y,f(X))$ is a most specific generalization of the three given atoms although other, less specific, generalizations exist such as $p(a,Y,X)$ and $p(Z,Y,X)$. 
	Being able to compute such generalizations is a mandatory ingredient in a number of program analyses and transformations such as partial deduction (e.g.~\cite{Gallagher:1993:TSL:154630.154640,DESCHREYE1999231}, supercompilation (e.g.~\cite{Sorensen:1998:IS:645795.665914}) and fold/unfold (e.g.~\cite{DBLP:journals/csur/PettorossiP98}) transformations where it is typically used as a mean to guarantee termination.

	In this work we develop a theory of generalization (or anti-unification) in the context of constraint logic programming (CLP) where - in its most declarative form - clause bodies and goals are conceptually represented by \textit{sets} of constraints and atoms. While some works exist on generalizing CLP, these typically focus on the underlying constraint domain and introduce widening operators (e.g. convex hull on $\mathbb{R}$) in order to generalize the constraint set at the \textit{semantical} level (e.g.~\cite{DBLP:journals/fuin/FioravantiPPS13}). Other existing works are targeted to a particular application such as learning constraints by generalization of samples of facts~\cite{subsumption}. 
	In contrast, we take a fundamentally different approach and focus on generalizing the \textit{syntactical representation} of the program structures to be generalized (basically conjunctions represented by sets of constraints and atoms), and this independent of the particular constraint or application domain. 
	Our main motivation for doing so is to obtain a generalization operator that computes the maximal common syntactical structure shared by two goals or, by extension, clauses and predicates. This is a basic operation needed in the work on clone detection and detection of algorithmic equivalence (see e.g.~\cite{DBLP:conf/ppdp/MesnardPV16}) where one needs to frequently and rapidly compute such generalizations in order to compare how closely related two goals or clauses are. Moreover, the generalization operator we propose
	being domain- and application-independent, it could readily be integrated in other program manipulation approaches that need to generalize CLP clauses (examples include conjunctive partial deduction or ILP-based learning). While other more involved generalization approaches exist, for example grammar-based E-generalization~\cite{DBLP:journals/corr/Burghardt14b} and regular tree abstraction~\cite{BOUAJJANI200637}, we focus in this work on the most specific generalization (msg) as it suits best our particular context.

	Computing a most specific generalization (\textit{msg}) of two or more terms (and, by extension, atoms) or other tree-like structures is straightforward and can be done in linear time. Existing algorithms are typically based on the seminal algorithm of Plotkin~\cite{plotkin} in which two tree-structures are generalized by computing their maximal common subtree and replacing non-matching subbranches by new variables.
	However, when more involved computational structures need to be generalized (such as conjunctions of atoms, goals and clauses), the literature is less clear on what algorithms are available to automatically compute their most specific generalization. The basic problem, of course, being that in this case one is not necessarily interested in viewing the structures that need to be generalized as simple tree structures as that would be too restrictive. Take for instance the conjunctions $a\wedge b\wedge c$ and $a\wedge c$; when these conjunctions are considered as trees, computing the \textit{msg} would result in $a\wedge X$ missing the fact that also $c$ is common to both conjunctions. 
	Dependent on the application at hand, usually an ad-hoc technique is introduced that most often boils down to applying the classical \textit{msg} operation to (a subset of) the atoms of both structures, usually preserving the order in which the atoms appear in the structure for efficiency reasons. This is for example the case in conjunctive partial deduction~\cite{DBLP:journals/toplas/LeuschelMS98} where conjunctions are treated as sequential structures and the abstraction operation generalizes \textit{ordered} subconjunctions. This is defensible when partially deducing Prolog programs where the order of the atoms in a conjunction \textit{is} important and usually needs to be preserved, but it nevertheless limits the possible outcomes of the generalization operation and makes it hard to transfer the approach towards other contexts where the order of the individual atoms or other computational constituents might be less important.

	While CLP is an important target in itself -- especially given its aptitude as a universal intermediate language for analysis and transformation~\cite{DBLP:journals/tplp/GangeNSSS15}, our generalization operator, basically manipulating sets of atoms, can also be beneficial in program transformation for classical (non-constraint) logic programming, as it allows to lift the restriction imposed by most of the existing generalization operators to preserve the order of the atoms in the conjunctions that are generalized.

	The paper is structured as follows. In Section~\ref{sec:preliminaries} we introduce some preliminary concepts and notation, in Section~\ref{section-algo} we introduce our main abstraction and algorithm, we evaluate our approach by means of a prototype implementation discussed in Section~\ref{section-prototype} before concluding in Section~\ref{section-conclusion}.
	
	%%%%%%%%%%%%%%%%%%%%%%%%%%%%%%%%%%%%%%%%%%%%%%%%%%%
	%%%%%%%%%%%%%%%%%%%%%%%%%%%%%%%%%%%%%%%% Preliminaries
	%%%%%%%%%%%%%%%%%%%%%%%%%%%%%%%%%%%%%%%%%%%%%%%%%%%
	\section{Preliminaries}\label{sec:preliminaries}
	
	\subsection{Constraint logic programming essentials}
	Let us first introduce some of the basic concepts and notations that will be used throughout the paper. A CLP program is traditionally defined~\cite{clpsurvey} over a CLP context, which is a 5-tuple $\langle X, \mathcal{V}, \mathcal{F}, \mathcal{L}, \mathcal{Q}\rangle$, where $X$ is a non-empty set of constant values, $\mathcal{V}$ is a set of variable names, $\mathcal{F}$ a set of function names, $\mathcal{L}$ is a set of constraint predicates over $X$ and $\mathcal{Q}$ a set of predicate symbols. The sets $X, \mathcal{V}, \mathcal{F}, \mathcal{L}$ and $\mathcal{Q}$ are all supposed to be disjoint sets. Symbols from $\mathcal{F}$, $\mathcal{L}$, and $\mathcal{Q}$ have an associated arity and as usual we write $f/n$ to represent a symbol $f$ having arity $n$. Given a CLP context $\mathcal{C} = \langle X, \mathcal{V}, \mathcal{F}, \mathcal{L}, \mathcal{Q}\rangle$, we can define the set of terms over $\mathcal{C}$ as $\mathcal{T}_\mathcal{C}= X \cup \mathcal{V} \cup \{f(t_1, t_2, ..., t_n) | f/n \in \mathcal{F}$ where $\forall i \in 1..n : t_i \in \mathcal{T}_\mathcal{C}\}$. 
	Likewise, the set of constraints over $\mathcal{C}$ is defined as $\mathcal{C}_\mathcal{C}=\{L(t_1, t_2, ..., t_n)\:|\:L/n \in \mathcal{L}\mbox{ and }\forall i \in 1..n : t_i \in \mathcal{T}_\mathcal{C}\}$ and the set of atoms as $\mathcal{A}_\mathcal{C}=\{p(V_1,\ldots,V_n)\:|\:p/n\in \mathcal{Q}\mbox{ and }\forall i:V_i\in\mathcal{V}\}$. 
	A goal $G\subseteq(\mathcal{C}_\mathcal{C}\cup\mathcal{A}_\mathcal{C})$ is a set of atoms and/or constraints. We will sometimes use the notion of a literal to refer to either a constraint or an atom.
	A program $P$ is then defined over a context $\mathcal{C}=\langle X, \mathcal{V}, \mathcal{F}, \mathcal{L}, \mathcal{Q}\rangle$ as a set of constraint Horn Clause definitions where each clause definition is of the form $p(V_1,\ldots,V_n)\leftarrow G$ where $p(V_1,\ldots,V_n)$ is an atom called the head of the clause with $\{V_1,\ldots, V_n\}$ all distinct variables, and $G$ a goal called the body of the clause. We will sometimes refer to a clause by $p(V_1,\ldots,V_n)\leftarrow C,B$ if we want to distinguish the set of constraints $C$ and the set of atoms $B$ in its body.
	A fact is a clause with only constraints in its body. For a predicate symbol $p$, we use $\pred(p)$ to denote the definition of $p$ in the program at hand, i.e. the set of clauses having a head atom using $p$ as predicate symbol. Without loss of generality, we suppose that all clauses defining a predicate have the same head (i.e. use the same variables to represent the arguments).
	
	In what follows we will often consider the context to be implicit and talk simply about a program and the predicates and clauses defined therein. Without loss of generality we assume that the set of constraint predicates $\mathcal{L}$ contains at least an equality relation represented by $=$. Note that in our definition of a clause, atoms contain only variables as arguments. This is by no means a limitation, as arguments can be instantiated by means of equality constraints in the clause body. 
	
	Different semantics have been defined for CLP. In our approach, we consider the declarative semantics as in~\cite{clpsurvey}. 	A constraint domain $\mathcal D$ is comprised of a set of values and an interpretation for the relational symbols used in the underlying context. 
	Given a constraint domain $\mathcal D$, a valuation is a mapping from variables to values and we say that a set of constraints $C$ is satisfiable, noted $\mathcal D\vDash C$ if there exists a valuation $v$ with $\dom(v)=\vars(C)$ such that $v(C)$ evaluates to $true$. In this work we focus on the declarative semantics of a program which is defined as a subset of $\mathcal{B}_\mathcal{D}$, the latter defined as $\{p(v_1,\ldots,v_n)\:|\: p/n\in \mathcal{Q}\mbox{ and } v_i\in\mathcal{D}\}$.
	For a program $P$ and an underlying constraint domain $\mathcal{D}$, the immediate consequence operator $T_P^\mathcal D$ can be defined as a continouous function on $\mathcal{B}_\mathcal{D}$ as follows~\cite{clp-semantics}:
	\[T_P^\mathcal D(I)=\left\{\begin{array}{ll|ll}
	& && p(V_1,\ldots,V_n)\leftarrow C,B\mbox{ a renamed apart clause in $P$}\\
	p(v_1,\ldots,v_n)  & & &  v\mbox{ a valuation on $\mathcal{D}$ such that $\mathcal{D}\vDash v(C)$ and $v(B)\subseteq I$}\\
	& & &\forall i\in\{1,\ldots,n\} : v_1 = v(V_i)\\
	\end{array}\right\}
	\]
	
	The semantics of a program $P$, which we will represent by $\llbracket P\rrbracket$ can then be defined as the least fixed point of $T_P^{\mathcal{D}}$. In what follows, we will often simply refer to the semantics of a program without specifying the underlying constraint domain or CLP context.
	The semantics of a goal $G$ with respect to a program $P$ and a set of variables $V=\{V_1,\ldots,V_k\}$ occurring in $G$ is then defined as $\{q_P(v_1,\ldots,v_k)\in \llbracket P'\rrbracket\}$ where $P'$ is the program $P$ to which a clause $q_P(V_1,\ldots,V_k)\leftarrow G$ has been added with $q_P$ a special predicate symbol not occurring in $P$. Slightly abusing notation, we will use $\llbracket G\rrbracket^P_{V}$ to denote the semantics of the goal $G$ w.r.t. the program $P$ and the set of variables $V$, or simply $\llbracket G\rrbracket_V$ if the program is clear from the context.
	While in practice CLP is typically used over a concrete domain, we will make abstraction of the concrete domain over which the constraints are expressed, as our generalization theory only considers the syntactical structure of the constraints (and not their semantics).

	\subsection{Generalization principles}\label{subsection-gen}
	For any program expression $e$ (be it a term, a constraint, an atom or a goal), we use $\vars(e)$ to denote the set of variables that appear in $e$. As usual, a substitution is a mapping from variables to terms and will be denoted by a Greek letter. For any mapping $\sigma$, $\dom(\sigma)$ represents its domain, $\img(\sigma)$ its image, and for a program expression $e$ and a substitution $\sigma$, $e\sigma$ represents the result of simultaneously replacing in $e$ those variables $V$ that are in $\dom(\sigma)$ by $\sigma(V)$. A renaming is a special kind of substitution, mapping variables to distinct variables (i.e. being injective). For a renaming $\rho$, we use $\rho^{-1}$ to denote its reverse. Two expressions $e_1$ and $e_2$ are variants if and only if $e_1\rho=e_2$ and $e_1=e_2\rho^{-1}$ for some renaming $\rho$. For an expression $e$, a fresh renaming of $e$ is a variant of $e$ where all variables have been renamed to new, previously unused variables. 
	Given the notion of a renaming, we can easily define a quasi-order relation between goals as follows.
	
	\begin{definition}[Generalization]\label{def_gen}
		Let $G$ and $G'$ be goals. We say that $G$ is \emph{more general than} (or, synonymously, is a generalization of) $G'$, denoted $G \preceq G'$, if and only if there exists a renaming $\rho$ such that $G\rho \subseteq G'$.
	\end{definition}
	
	Hence, a goal is more general than another goal if the former is a subset of the latter modulo a variable renaming. While our notion of generalization is simple and purely of syntactic nature, it is in line with what one could consider to be a generalization at the semantic level, since generalizing a goal corresponds to removing computational units (constraints or atoms). 
	
	\begin{example}
		Consider the goal $G = \{p(X,Y), X=a, Y=b\}$. Then the goals $\{p(X,Y), X=a\}$, $\{p(X,Y)\}$, $\{p(A,B)\}$ and $\{p(X,Y),Y=b\}$ are all generalizations of $G$.
	\end{example}
	
	In a more traditional logic programming context, an atom is typically defined as more general than another atom if the latter can be obtained from the former by applying a substitution~\cite{benkerimi:lloyd,Sorensen95analgorithm} and generalizing an atom is done by replacing terms with new variables. Since in our context, atoms are represented in simple form (i.e. all arguments being variables), the same effect can be obtained by removing constraints from the goal. Note that our definition is, at the same time, more general, as it allows to generalize a goal also by removing atoms. In a traditional logic programming context where goals are conjunctions of atoms, one need to resolve to higher-order generalization techniques in order to obtain the same effect. Also observe that in our generalization scheme, constants and functors are impossible to generalize through variabilization, because renamings are mappings from variables to variables only. %This design choice comes from the observation that many computation domains have ground values that can't be abstracted away from constraints. For instance in CLP($\{true, false\}$), as exposed in~\cite{clpbfd}, $X = Y$ shouldn't be a generalization of $X = true$, as it would also be a generalization of $X = false$, thereby wiping dataflow relations away. 
	This is a fundamental difference of relation $\preceq$ with the $\theta$-subsumption relation of~\cite{plotkin}, the latter being defined by substitutions rather than renamings. Our relation is a first-order generalization (higher-order terms as well as predicate names can't be generalized) with firm constants and functors.
	
	Defining generalizations with injective mappings (i.e. renamings) rather than arbitrary mappings from $\mathcal{V}$ to $\mathcal{V}$ as in $\theta$-subsumption ensures that some variable $V$ cannot be generalized by two (or more) distinct variables in the computed generalization. If renamings weren't injective, a generalization could have many more variables than the goals it generalizes; in that case, the generalization could contain variables that are no longer linked on the semantic level such as new variables occurring only once. For many domains, the injective property makes more sense, not allowing variables to lose their semantics once generalized.
	
	\begin{example}
		Let us consider $G = \{X > 2, X < 10\}$ where we suppose the constraints are over some numerical domain. In our framework, the three following generalizations are correct: $\{A > 2, A < 10\}$, $\{A > 2\}$, $\{A < 10\}$. Without the restriction to injective renamings, $\{A > 2, B < 10\}$ would also be a valid generalization.
	\end{example}
	
	In practice, some domain-specific constraint predicates and functional operators could be characterized as commutative (such as $=$ and $+$ for numeric instances), which would affect their generalizations. The approach presented in this paper could easily be extended to take this property into account, but for the sake of clarity we will keep the approach purely syntactic on that point of view, only considering non-commutative symbols in textual representations of constraints. 
	Despite their differences, our generalization relation shares the following property with the usual $\theta$-subsumption order from~\cite{subsumption-lattice}.
	
	\begin{proposition}\label{prop-partial-order}
		The generalization relation $\preceq$ is a quasi-order.
	\end{proposition}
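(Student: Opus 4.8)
The plan is to show that $\preceq$ is reflexive and transitive, which are precisely the two conditions defining a quasi-order (a quasi-order need not be antisymmetric, which is consistent with the fact that variants are mutually more general without being equal). Both conditions follow directly from manipulating the defining renaming $\rho$ in Definition~\ref{def_gen}, so the proof should be short and mechanical.

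For reflexivity, I would take any goal $G$ and exhibit a renaming $\rho$ witnessing $G\preceq G$. The natural choice is the identity renaming $\rho = \mathit{id}$, which is trivially injective and hence a legitimate renaming; then $G\rho = G \subseteq G$, so $G\preceq G$ holds. This step is immediate.

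For transitivity, I would assume $G_1 \preceq G_2$ and $G_2 \preceq G_3$ and aim to produce a single renaming witnessing $G_1 \preceq G_3$. By Definition~\ref{def_gen} there are renamings $\rho_1$ and $\rho_2$ with $G_1\rho_1 \subseteq G_2$ and $G_2\rho_2 \subseteq G_3$. The idea is to compose them: applying $\rho_2$ to the inclusion $G_1\rho_1 \subseteq G_2$ gives $(G_1\rho_1)\rho_2 \subseteq G_2\rho_2 \subseteq G_3$, and since $G_1\rho_1\rho_2 = G_1(\rho_1\rho_2)$, the composite $\rho = \rho_1\rho_2$ satisfies $G_1\rho \subseteq G_3$. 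It remains to check that $\rho_1\rho_2$ is itself a renaming, i.e. injective as a map from variables to variables: a composition of two injective variable-to-variable maps is again injective and again maps variables to variables, so $\rho$ qualifies as a renaming and the transitivity step is complete.

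The only point requiring a little care — and the closest thing to an obstacle — is the claim that applying a renaming preserves set inclusion, namely that $A\subseteq B$ implies $A\rho\subseteq B\rho$ (used in the transitivity step when pushing $\rho_2$ through the inclusion $G_1\rho_1\subseteq G_2$). This holds because applying a substitution acts literal-by-literal on the elements of a goal, so $A\rho = \{\ell\rho \mid \ell\in A\}$ and every element of $A\rho$ arises from an element of $A\subseteq B$, hence lies in $B\rho$; I would state this as an elementary observation about substitution acting on sets rather than belabor it. With inclusion-preservation and the injectivity of composed renamings in hand, both quasi-order axioms are established and the proof concludes.
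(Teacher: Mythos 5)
Your proof is correct and follows essentially the same route as the paper: reflexivity via the (identity) renaming and the trivial inclusion $G\subseteq G$, and transitivity by composing the two witnessing renamings and observing that a composition of renamings is again a renaming. The only cosmetic difference is that the paper writes the inclusions as equalities $G_2 = G_1\rho_1\cup\Delta_1$ and distributes $\rho_2$ over the union, whereas you push $\rho_2$ through the inclusion directly; these are the same argument.
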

	\begin{proof}
		We need to prove that $\preceq$ is transitive and reflexive. Reflexity is immediate since for any goal $G\subseteq G$ and, thus, $G\preceq G$. 
		For transitivity, consider three arbitrary goals $G_1$, $G_2$ and $G_3$ such that $G_1 \preceq G_2$ and $G_2 \preceq G_3$. Then by definition~\ref{def_gen}, there exist $\Delta_1, \Delta_2, \rho_1$ and $\rho_2$ such that $G_2 = G_1\rho_1 \cup \Delta_1$ and $G_3 = G_2\rho_2 \cup \Delta_2$. Or, equivalently, 
		\begin{gather*}
		G_3 = (G_1\rho_1 \cup \Delta_1)\rho_2 \cup \Delta_2 = 	G_1\rho_1\rho_2 \cup (\Delta_1\rho_2 \cup \Delta_2) 
		\end{gather*}
		Since the composition of two renamings is a renaming, and the union of two sets a set, it follows that $G_1\preceq G_3$.
	\end{proof}
	
	Generalized goals are linked by their semantics as stated in Proposition~\ref{prop-semantics} below.
	
	\begin{proposition}\label{prop-semantics}
		Let $P$ be a program and $G$ and $G'$ goals. If $G\preceq G'$ such that $G\rho\subseteq G'$ for some renaming $\rho$, then for any set of variables $V\subseteq\vars(G\rho)$, we have that $\llbracket G\rho\rrbracket_V\supseteq\llbracket G'\rrbracket_{V}$.
	\end{proposition}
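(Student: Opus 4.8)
The plan is to compare the two augmented programs that define the two semantics. Writing $V=\{V_1,\ldots,V_k\}$, let $P_1 = P\cup\{q_P(V_1,\ldots,V_k)\leftarrow G\rho\}$ and $P_2 = P\cup\{q_P(V_1,\ldots,V_k)\leftarrow G'\}$ be the programs used to define $\llbracket G\rho\rrbracket_V$ and $\llbracket G'\rrbracket_V$ respectively; the same head is legitimate in both cases since $V\subseteq\vars(G\rho)\subseteq\vars(G')$. Proving the claimed inclusion amounts to showing that every tuple $q_P(v_1,\ldots,v_k)$ belonging to $\llbracket P_2\rrbracket$ also belongs to $\llbracket P_1\rrbracket$.

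First I would establish the auxiliary fact that $\llbracket P_1\rrbracket$ and $\llbracket P_2\rrbracket$ coincide on every atom whose predicate symbol is \emph{not} $q_P$, both being equal there to $\llbracket P\rrbracket$. This holds because $q_P$ is a fresh symbol occurring only in the head of the clause that is added, and never in any body of $P$, $G\rho$ or $G'$; consequently $T_{P_1}^{\mathcal{D}}$, $T_{P_2}^{\mathcal{D}}$ and $T_{P}^{\mathcal{D}}$ act identically on the non-$q_P$ atoms, so their least fixed points agree on those atoms.

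The core of the argument then exploits $G\rho\subseteq G'$. Take any $q_P(v_1,\ldots,v_k)\in\llbracket G'\rrbracket_V$. By the definition of the immediate consequence operator applied to the $q_P$-clause of $P_2$ (the only clause defining $q_P$), there is a valuation $v$ with $\mathcal{D}\vDash v(C')$, $v(B')\subseteq\llbracket P_2\rrbracket$ and $v(V_i)=v_i$ for each $i$, where $C'$ and $B'$ are respectively the constraints and the atoms of $G'$. Splitting $G\rho$ into its constraints $C$ and its atoms $B$, the inclusion $G\rho\subseteq G'$ gives $C\subseteq C'$ and $B\subseteq B'$. Since satisfiability is preserved under passing to a subset of the constraints, $\mathcal{D}\vDash v(C)$; and since $B\subseteq B'$ we get $v(B)\subseteq\llbracket P_2\rrbracket$, which by the auxiliary fact (these being non-$q_P$ atoms) means $v(B)\subseteq\llbracket P_1\rrbracket$. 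Feeding the very same valuation $v$ into the $q_P$-clause of $P_1$ therefore yields $q_P(v_1,\ldots,v_k)\in\llbracket P_1\rrbracket$, that is $q_P(v_1,\ldots,v_k)\in\llbracket G\rho\rrbracket_V$, as desired.

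I expect the only delicate point to be the auxiliary fact about the freshness of $q_P$: one must argue cleanly, for instance by induction on the iterates of the consequence operator (or directly from the least-fixed-point characterisation), that introducing a predicate that appears in no body leaves the meaning of all the original predicates untouched. Once this is in place, the remainder is the routine observation that removing literals from a goal can only weaken the conditions under which its collecting clause fires, and hence can only enlarge the goal's semantics.
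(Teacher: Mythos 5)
Your proof is correct and follows essentially the same route as the paper's: take the witnessing valuation for an element of $\llbracket G'\rrbracket_V$ and observe that, since the constraints and atoms of $G\rho$ are subsets of those of $G'$, the same valuation witnesses membership in $\llbracket G\rho\rrbracket_V$. The only difference is that you make explicit the bookkeeping with the two augmented programs and the freshness of $q_P$ (a point the paper's terser proof glosses over by writing $\llbracket P\rrbracket$ directly), which is a welcome tightening rather than a different argument.
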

	\begin{proof}
		The proof is trivial given that $G\rho\subseteq G'$. Indeed, suppose that $G'$ is composed of a set of constraints $C'$ and a set of atoms $B'$. Then, if $v$ is a valuation on the underlying domain $\mathcal{D}$ such that $\mathcal{D}\vDash v(C')$ and $v(B')\subseteq \llbracket P\rrbracket$, then there exist some predicate symbol $q$ such that $q(v(V_1),\ldots,v(V_k))\in \llbracket G'\rrbracket$. Now, since $G'=G\rho\cup\Delta$ for some set of constraints and/or atoms $\Delta$, it holds that $\mathcal{D}\vDash v(C)$ for the constraints $C\subseteq G\rho$ and $v(B)\subseteq \llbracket P\rrbracket$ for the set of atoms $B\subseteq G\rho$. Consequently, $q(v(V_1),\ldots,v(V_k))\in \llbracket G\rho\rrbracket$.
	\end{proof}
	
	We can now define the computational structure that is shared by a set of goals through the concept of common generalization.  
	
	\begin{definition}[Common generalization] \label{def-common-generalization}
		Let $\{G_1, G_2, ..., G_k\}$ be a set of goals. Then a goal $G$ is a \emph{common generalization} of $\{G_1, G_2, ..., G_k\}$ if and only if $\forall i \in 1..k : G \preceq G_i$. 
	\end{definition}

	In what follows we will mostly consider common generalizations of \textit{two} goals. Note that at least one common generalization exists for any two goals: the empty set which can be seen as the most general generalization, i.e. the minimal element in the quasi-order $\preceq$. But obviously the empty set is not an interesting generalization to express similarities in groups of literals. In what follows, we are interested in computing what we call a \textit{most specific generalization}, that is a maximal element with respect to $\preceq$. A most specific generalization is also sometimes called a \textit{least general generalization}. 
	
	\begin{definition}[msg]\label{def-msg}
		Let $G$ be a common generalization of $S = \{G_1, G_2, ..., G_n\}$. Then $G$ is a \emph{most specific generalization} (msg) of $S$ if there does not exist another common generalization of $S$, say $G'$, such that $G \preceq G'$ and $G' \npreceq G$.
	\end{definition}
	
	Note that, by definition, a common generalization of two goals $G_1$ and $G_2$ is a variant of both a subset from $G_1$ and of a subset from $G_2$. Without loss of generality, we will often consider a common generalization to be a subset of one of the goals, as in the following example.
	
	\begin{example}\label{ex-common-gen}
		Let us consider the goals
		\begin{gather*}
		G_1 = \{f(X),g(X),g(Y)\} \qquad G_2 = \{f(R), g(T)\} 
		\end{gather*}
		
		$G = \{f(X), g(Y)\} \subseteq G_1$ is a common generalization of $\{G_1, G_2\}$, as there exists $\rho = [X\leftarrow R, Y \leftarrow T]$ such that $G_2 = G\rho$, so $G\preceq G_2$; it also holds that $G \subset G_1$, so $G \preceq G_1$. Moreover, $G$ is an msg of $\{G_1, G_2\}$ as no strictly less general common generalization exists, $G$ having generalized all literals in $G_2$. Note that $G_2$ is also an msg of $\{G_1, G_2\}$, which can as easily be proved. In fact, by Definition~\ref{def_gen}, any variant of $G$ is also an msg for $G_1$ and $G_2$.
	\end{example}
	
	Contrary to the case of traditional logic programming, where the most specific generalization of two goals is unique (modulo a variable renaming)~\cite{benkerimi:lloyd}, in our context two goals may typically have several most specific generalizations. 
	
	\begin{example}
		Let us consider the goals 
		\begin{gather*}
		G_1 = \{f(X), g(Y), h(X,Y)\} \qquad G_2 = \{f(R), g(U), h(T, S)\}
		\end{gather*}
		$\{f(X), g(Y)\}$ and $\{h(X,Y)\}$ are both msgs of $\{G_1, G_2\}$. Indeed, each of these generalizations doesn't allow the addition of any more literals while remaining a valid common generalization of $G_1$ and $G_2$, due to the injectivity of the generalization renamings. The two msgs are thus incomparable, $\preceq$-wise. 
	\end{example}
	
	Amongst the msgs of a set of goals, some generalizations could only have a few literals, thereby capturing less common structure than others. Ideally, we are interested in those most specific generalizations that are of maximal cardinality. 
	
	\begin{definition}[mcg]\label{def-mcg}
		Let $G$ be a common generalization of $S = \{G_1, G_2, ..., G_n\}$. Then $G$ is a \emph{maximal common generalization} (mcg) of $S$ if there does not exist another common generalization of $S$, say $G'$, such that $|G'| > |G|$.
	\end{definition}
	
	It is trivial to show that a maximal generalization $G$ of a set of goals $S$ is also a most specific generalization of $S$. Indeed, if it weren't the case, it would, by Definition~\ref{def-msg}, be possible to add some literal to $G$ and get a more specific generalization. But the latter generalization would have strictly greater cardinality than $G$, so $G$ cannot be maximal. 
	However, computing a maximal common generalization is an intractable problem.
	The reason is, of course, due to the fact that we need to match unordered \textit{sets} of literals rather than sequences, whereas the classical subsumption-based formulation from~\cite{plotkin} is computable in polynomial time. 

	In order to show this formally, we define a decision problem variant which we name MCGP (Maximal Common Generalization Problem) that we show to be NP-complete. The decision problem variant MCGP boils down to verifying whether there exist a renaming $\rho$ such that the smallest of two goals is in itself a maximal common generalization of both. Formally: given two goals $G_1$ and $G_2$ with $|G_1| \leq |G_2|$ and $\vars(G_1)\cap\vars(G_2)=\emptyset$, verify whether there exists $\rho$ such that $G_1\rho$ is a subset of $G_2$.
	
	\begin{theorem}\label{theo:NPcomplete}
		The MCGP problem is NP-complete. 
	\end{theorem}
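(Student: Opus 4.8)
My plan is to prove the two directions of NP-completeness separately: membership in NP and NP-hardness.

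For membership, I would take the renaming $\rho$ itself --- restricted to $\vars(G_1)$ --- as the certificate. Because $\rho$ maps each of the at most $|\vars(G_1)|$ variables to a variable occurring in $G_2$, the certificate has size polynomial in the input. A verifier then checks, in polynomial time, that $\rho$ is injective and that each of the $|G_1|$ literals of $G_1\rho$ belongs to $G_2$ (a linear number of set-membership tests). This places MCGP in NP.

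For NP-hardness, I would reduce from \textsc{Clique}. Given an instance asking whether an undirected graph $\mathcal{G}=(V,E)$ contains a clique of size $k$, I encode both the pattern $K_k$ and the host $\mathcal{G}$ as goals over a single binary predicate symbol $e$. I associate a fresh variable $X_i$ with each of the $k$ clique slots and set $G_1=\{e(X_i,X_j),\,e(X_j,X_i)\mid 1\le i<j\le k\}$, and a fresh variable $Y_u$ with each vertex $u\in V$ and set $G_2=\{e(Y_u,Y_v),\,e(Y_v,Y_u)\mid \{u,v\}\in E\}$. Storing both orientations faithfully encodes undirected adjacency with ordered atoms, and the variable sets are disjoint by construction. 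An injective renaming $\rho$ with $G_1\rho\subseteq G_2$ corresponds exactly to an injective assignment of the $k$ slots to distinct vertices whose pairwise images are all adjacent --- i.e. a $k$-clique --- and conversely every $k$-clique yields such a $\rho$. The construction is clearly polynomial in the size of $\mathcal{G}$.

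The step I expect to require the most care is reconciling the reduction with the side condition $|G_1|\le|G_2|$ imposed by the problem statement. Since any $k$-clique needs at least $\binom{k}{2}$ edges, the instance is trivially negative whenever $k(k-1)>2|E|$; in that case I would output a fixed negative MCGP instance, and otherwise $|G_1|=k(k-1)\le 2|E|=|G_2|$ holds, so the produced pair is a legitimate MCGP instance. The genuine crux is then the rigorous verification that solutions correspond: injectivity of $\rho$ must match the requirement that clique vertices be distinct, and --- crucially --- because $e$ is the only predicate symbol, an atom $e(Y_u,Y_v)$ lies in $G_2$ if and only if $\{u,v\}\in E$, so no spurious subset relation can arise. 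Once this equivalence is pinned down, NP-hardness follows from the NP-completeness of \textsc{Clique}, and combined with membership we conclude that MCGP is NP-complete.
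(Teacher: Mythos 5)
Your proof is correct, and the NP-membership half is essentially the paper's own argument: the renaming is the polynomial-size certificate and the inclusion $G_1\rho\subseteq G_2$ is checked literal by literal. The hardness half takes a genuinely different route. The paper reduces from the Induced Subgraph Isomorphism Problem, encoding each graph with \emph{two} predicates --- unary $\mathit{node}$ atoms for vertices and binary $\mathit{edge}$ atoms for edges --- and arguing that an injective renaming with $G_1\rho\subseteq G_2$ is exactly an induced-subgraph embedding. You reduce from \textsc{Clique}, encoding only edges (in both orientations) over a single binary predicate. Your choice buys some robustness: because the pattern $K_k$ is complete, the distinction between subgraph and \emph{induced} subgraph containment vanishes, so the bare inclusion $G_1\rho\subseteq G_2$ --- which by itself only certifies that pattern edges map to host edges, not the converse --- is exactly what is needed; the paper's reduction must extract the biconditional edge condition of ISIP from that same inclusion, which is the more delicate point of its argument. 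You also treat the side condition $|G_1|\le|G_2|$ explicitly, which the paper glosses over. What the paper's encoding buys is that the $\mathit{node}$ atoms force the induced vertex map to be total even on isolated vertices, whereas your reduction relies on every clique slot occurring in some edge atom --- true only for $k\ge 2$ --- so you should dispatch $k\le 1$ (trivially positive, modulo $V\neq\emptyset$ when $k=1$) the same way you dispatch $k(k-1)>2|E|$, by emitting a fixed instance of the correct polarity. With that cosmetic fix, your reduction is polynomial and establishes NP-completeness just as the paper's does.
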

	\begin{proof*}
		It is easy to see that MCGP is in NP: given renamed apart goals $G_1$ and $G_2$ as well as a renaming $\rho$, the application of $\rho$ on all the literals in $G_1$ will either yield a subset of $G_2$ or not, which can be verified in polynomial time.
		
		We will now perform a reduction from the Induced Subgraph Isomorphism Problem (ISIP) which is stated as follows~\cite{SYSLO198291}. Given two unoriented and unweighted graphs, $(V_1, E_1)$ and $(V_2, E_2)$, where for each graph $(V_i,E_i)$, $V_i$ denotes the set of vertices and $E_i$ the set of edges between vertices from $V_i$. Assuming, moreover, that $|V_1| \le |V_2|$, then ISIP is the problem of deciding whether $(V_1,E_1)$ is isomorphic to an induced subgraph of $(V_2,E_2)$ meaning there exists a (total) injective function $f:V_1\mapsto V_2$ such that $\forall x,y\in V_1$, there is an edge $(x,y)\in E_1$ if and only if there is an edge $(f(x),f(y))\in E_2$. 
		The problem is known to be NP-complete~\cite{SYSLO198291}.
		
		We can transform any instance of ISIP into an instance of MCGP as follows. Given the graphs $(V_1, E_1)$ and $(V_2, E_2)$ (with $|V_1|\le|V_2|$), we define goals 
		\begin{gather*}
		G_1 = \{ node(V_x)\: |\: x \in V_1 \} \cup \{ edge(V_x,V_y)\: |\: (x,y) \in E_1 \} \\
		G_2 = \{ node(V_x)\: |\: x \in V_2 \} \cup \{ edge(V_x,V_y)\: |\: (x,y) \in E_2 \}
		\end{gather*}
		
		In these goals, we suppose that \textit{node} is a unary predicate representing nodes and \textit{edge} a binary predicate representing edges between nodes. Given a node $x$ we use a variable named $V_x$ to represent this node in the goal. If $G_1$ and $G_2$ have at least one variable's name in common, considering a renamed apart version of $G_1$ rather than $G_1$ itself will ensure that the obtained instance of MCGP is valid. Using this scheme, the transformation from graphs into goals can obviously be done in polynomial time. We will now prove that this transformation preserves the positive and negative instances of ISIP, that is $(V_1,E_1)$ is isomorphic to an induced subgraph of $(V_2,E_2)$ if and only if $G_1$ is an mcg of $\{G_1,G_2\}$. 
		
		\begin{itemize}
			\item[$(\Rightarrow)$] 	Let us suppose that $(V_1,E_1)$ is 	isomorphic to an induced subgraph of $(V_2,E_2)$. In other words there exists an injective function $f:V_1\mapsto V_2$ such that $\forall x,y\in V_1$, there is an edge $(x,y)\in E_1$ if and only if there is an edge $(f(x),f(y))\in E_2$. We have to show that $G_1$ is an mcg of $G_1$ and $G_2$. Obviously the existence of $f$ implies the existence of a renaming $\rho:\vars(G_1)\mapsto\vars(G_2)$ defined as $\rho=\{(V_x,V_y)\:|\:(x,y)\in f\}$. Since $f$ is a total injective function, we have that for each $\mathit{node}(V_x)\in G_1$ there is $\mathit{node}(V_x\rho)\in G_2$ and, by definition of $f$, for each $\mathit{edge}(V_x,V_y)\in E_1$ there is $\mathit{edge}(V_x\rho,V_y\rho)\in G_2$. In other words $G_1\rho$ is a subset of $G_2$ and, hence, $G_1$ is a generalization of $G_2$ and, consequently, a maximal common generalization of $\{G_1,G_2\}$.
			
			\item[$(\Leftarrow)$] The other way round, suppose that $G_1$ 	is an mcg for $\{G_1, G_2\}$, implying there exists a renaming $\rho$ such that $G_1\rho\subseteq G_2$. Given that $\dom(\rho)=\vars(G_1)$ and that $\rho$ is injective by definition, we can define a function $f:V_1\mapsto V_2$ as $f = \{(x,y)\:|\:(V_x,V_y)\in\rho\}$ that is injective as well. Now, $\dom(f)=V_1$ (i.e. $f$ is total) since there is a $\mathit{node}(V_x)\in G_1$ for each vertex $x\in V_1$. Moreover, since $G_1\rho\subseteq G_2$, we have that for each $\mathit{edge}(V_x,V_y)\in G_1$ there exists $\mathit{edge}(V_x\rho,V_y\rho)$ and, consequently, we have that $\forall x,y\in V_1$, there is an edge $(x,y)\in E_1$ if and only if there is an edge $(f(x),f(y))\in E_2$ concluding the proof that $G_1$ is isomorphic to an induced subset of $G_2$.\mathproofbox
		\end{itemize}
	\end{proof*}
	
	%%%%%%%%%%%%%%%%%%%%%%%%%%%%%%%%%%%%%%%%%%%%%%%%%%%
	%%%%%%%%%%%%%%%%%%%%%%%%%%%%%%%%%%%%%%%% Algorithm
	%%%%%%%%%%%%%%%%%%%%%%%%%%%%%%%%%%%%%%%%%%%%%%%%%%%
	\section{Anti-unification algorithm}\label{section-algo}
	
	In the following we restrict ourselves to generalizations of two renamed apart goals - each of them being a set of literals. To construct a generalization of goals $G_1$ and $G_2$ our algorithm basically needs to search for a subset of $G_1$ that is also a subset of $G_2$ (modulo a variable renaming) and vice versa. To represent these matching subsets, the algorithm will use an injective mapping $\phi\subseteq G_1\times G_2$ that associates literals from $G_1$ to matching literals of $G_2$. For such $\phi$ to represent a generalization, there must exist a renaming $\rho$ such that $\dom(\phi)\rho = \img(\phi)$ and, likewise, $\img(\phi)\rho^{-1}=\dom(\phi)$. In what follows we will use the word generalization to refer to such a mapping $\phi$ as well as to the goal(s) it represents.
	
	\begin{example}\label{ex-mapping}
		Let us consider the goals
		\begin{gather*}
		G_1 = \{f(X), f(Z), g(X, Y), h(Y, Z)\} \qquad G_2 = \{f(R), g(R,T), h(T,U), f(U)\}.
		\end{gather*} 
		Then the mapping $\phi = \{(f(X), f(R)), (g(X,Y), g(R,T))\}$ (mapping $f(X)$ from $G_1$ to $f(R)$ from $G_2$ and $g(X,Y)$ from $G_1$ to $g(R,T)$ from $G_2$) is a generalization of $G_1$ and $G_2$. Indeed, $\dom(\phi)=\{f(X), g(X,Y)\}\subseteq G_1$ and is a variant of $\img(\phi)=\{f(R),g(R,T)\}\subseteq G_2$. 
	\end{example}
	
	Since computing maximal common generalizations is an NP-complete problem, we will rather focus on computing common generalizations $\phi$ that are not necessarily maximal, but whose size is \textit{stable} in the sense that replacing a limited number of elements in $\phi$ does not give rise to a larger generalization. Let us first define the notion of a \textit{$k$-swap}, being a replacement of at most $k$ elements in a generalization.
	
	\begin{definition}[k-swap]
		Let $G_1$ and $G_2$ be two renamed apart goals, and $\phi,\phi'\subseteq G_1\times G_2$ generalizations. We say that $\phi'$ is a \emph{$k$-swap} of $\phi$ if and only if $|\phi|=|\phi'|$ and $|\phi\cap\phi'|\ge |\phi|-k$ for some $k\in\mathtt{N}$.
	\end{definition}
	
	Intuitively, a k-swap of a generalization $\phi$ is obtained from $\phi$ by changing at most $k$ elements such that the result is still a generalization.
	
	\begin{example}\label{ex-k-swap}
		Let us reconsider the generalization $\phi$ from Example~\ref{ex-mapping}. Then the generalization
		\[\phi' = \{(g(Y,X),g(R,T)), (h(Y,Z),h(T,U))\}\] is a 1-swap of $\phi$, since effectively one element has been replaced in $\phi$ to get $\phi'$. In a similar way, $\phi'' = \{(f(Z), f(U)), (h(Y,Z), h(T,U))\}$ is a 2-swap of $\phi$ (but is not a 1-swap, as two elements have been replaced to get $\phi''$). 
	\end{example}
	
	Central to our approach to get a workable anti-unification algorithm is the notion of $k$-swap \textit{stability}. We call a generalization $\phi$ of goals $G_1$ and $G_2$ $k$-swap stable if any larger generalization of these goals differs from $\phi$ in \textit{at least} $k+1$ elements.
	
	\begin{definition}[k-swap stability]\label{def-k-swap-stable}
		Let $G_1$ and $G_2$ be two renamed apart goals and $\phi\subseteq G_1\times G_2$ a generalization of $G_1$ and $G_2$. Then the generalization $\phi$ is \emph{$k$-swap stable} if and only if there does not exist a larger generalization $\hat{\phi}\supset\phi'$ where $\phi'$ is a $k$-swap of $\phi$. Such a $\hat{\phi}$ is called a \emph{k-swap extension} of $\phi$. 
	\end{definition}
	
	A $k$-swap stable generalization, even though not necessarily maximal, is at least \textit{stable} in the sense that there is no obvious way (i.e. by replacing $k$ or less elements) in which a larger generalization could be obtained. Put differently, when a generalization is constructed by a search algorithm, $k$-swap stability implies that in order to find a larger generalization, the algorithm would need to reconsider at least $k+1$ choices that were made during construction.
	
	\begin{example}\label{ex-k-swap-stable}
		Consider $G_1 = \{a(X,Y,Z), b(X), c(Z), d(Z)\}$ and $G_2 = \{a(A,B,C), a(C,B,A), b(C), c(A), d(C)\}$. Then, when $\phi$ is constructed by mapping $a(X,Y,Z)$ to $a(A,B,C)$, the largest generalization mapping that $\phi$ can grow to is $\{(a(X,Y,Z),a(A,B,C)), (d(Z),d(C))\}$ or, equivalently, the generalization $\{a(X,Y,Z),d(Z)\}$. However $\phi$ is not 1-swap stable. Indeed, mapping $a(X,Y,Z)$ to $a(C,B,A)$ instead would give rise to $\{(a(X,Y,Z),a(C,B,A)), (b(X),b(C)), (c(Z),c(A))\}$ or, equivalently, the larger generalization $\{a(X,Y,Z),b(X),c(Z)\}$.
	\end{example}
	
	Obviously, if a generalization $\phi$ between goals $G_1$ and $G_2$ is k-swap stable for all $k\in\mathbb{N}$, then $\phi$ is a maximal and thus most-specific generalization. This is in line with the intuition that as $k$ grows, any k-swap-stable generalization has increased stability and thus increased accuracy (in number of generalized literals).
	
	One more concept needs to be introduced before we can define our algorithm for computing $k$-swap stable generalizations, namely an operator that allows to combine two generalizations into a single generalization.
	
	\begin{definition}[Enforcement operator]
		Let $G_1$ and $G_2$ be two renamed apart goals. The \emph{enforcement operator} is defined as the function $\enforce: (G_1\times G_2)^2 \mapsto (G_1\times G_2)$ such that for two generalizations $\phi$ and $\phi'$ for $G_1$ and $G_2$, $\phi \enforce \phi' = \phi' \cup M$ where $M$ is the largest subset of $\phi$ such that $\phi'\cup M$ is a generalization of $G_1$ and $G_2$. 
	\end{definition}
	
	In other words, $\phi\enforce\phi'$ is the mapping obtained from $\phi\cup\phi'$ by eliminating those pairs of literals $(A,A')$ from $\phi$ that are \textit{incompatible} with some $(B,B')\in\phi'$ either because it concerns the same literal(s) or because the involved renamings cannot be combined into a single renaming. 
	
	\begin{example}
		Consider $\phi = \{(a(X, Y), a(A, B)), (b(X), b(A))\}$, a generalization of two goals $G_1$ and $G_2$. Suppose $\phi' = \{(c(Y), c(C))\}$ is also a generalization of $G_1$ and $G_2$. Enforcing $\phi'$ gives $\phi\enforce\phi' = \{(b(X), b(A)), (c(Y), c(C))\}$. Indeed, this can be seen as forcing $Y$ to be mapped on $C$; therefore the resulting generalization can no longer contain $(a(X, Y), a(A, B))$ as the latter maps $Y$ on $B$. 
	\end{example}
	
	Algorithm~\ref{alg:kswap} represents the high-level construction of a k-swap stable generalization of goals $G_1$ and $G_2$. In the algorithm, we use $\gen(G_1, G_2)$ to represent those literals from $G_1$ and $G_2$ that are variants of each other, formally $\gen(G_1,G_2)=\{(A,A')\:|\:A\in G_1, A'\in G_2\mbox{ and }A\rho=A'\mbox{ for some renaming }\rho\}$. In each round, the algorithm tries to transform the current generalization $\phi$ (which initially is empty) into a larger generalization by forcing a new pair of literals $(A,A')$ from $\gen(G_1,G_2)$ in $\phi$, which is only accepted if doing so requires to swap no more than $k$ elements in $\phi$. More precisely, the algorithm selects a subset of $\phi$ (namely $\phi_s$) that can be swapped with a subset $\phi_G$ of the remaining mappings from $\gen(G_1,G_2)$ that are not yet used such that the result of replacing $\phi_s$ by $\phi_G$ in $\phi$ and adding $(A,A')$ constitutes a generalization. Note how condition 1 in the algorithm expresses that $\phi_s$ must include at least those elements from $\phi$ that are not compatible with $(A,A')$.
	The search continues until no such $(A,A')$ can be added.
	
	\begin{algorithm}[hbtp]
		\caption{Computing a $k$-swap stable generalization $\phi$ for goals $G_1$ and $G_2$}
		\label{alg:kswap}
		\begin{algorithmic}
			\State $\phi\gets\emptyset$
			\Repeat
			\State select $(A, A')\in \gen(G_1, G_2)\setminus\phi, \phi_s 	\subseteq \phi, \phi_G\subseteq (G_1\times G_2)\setminus(\phi \cup\{(A,A')\})$ such that:
			\State \hspace*{3ex}(1) $\phi_s\supseteq\phi\setminus 	\phi\enforce\{(A,A')\}$
			\State \hspace*{3ex}(2) $|\phi_s| \le k$
			\State \hspace*{3ex}(3) $|\phi_G| = |\phi_s|$
			\State \hspace*{3ex}(4) $\phi\setminus\phi_s\cup 	\phi_G\cup\{A,A'\}$ is a generalization of $G_1$ and $G_2$
			\If{such $(A,A'), \phi_G, \phi_s$ are found}
			\State $\phi\gets \phi\setminus\phi_s\cup \phi_G \cup 	\{(A,A')\}$
			\EndIf
			\Until no such $(A,A'), \phi_G, \phi_s$ are found
		\end{algorithmic}
	\end{algorithm}	
	
	Even if the algorithm as formulated is non-deterministic and does not specify how $(A,A')$, $\phi_s$ or $\phi_G$ are computed (we will come back to this), it can easily be seen that it computes a generalization that is $k$-swap stable.
	
	\begin{theorem}\label{theo:kswapstable}
		Given renamed apart goals $G_1$, $G_2$ and a constant $k\in\mathbb{N}$, 
		the generalization computed by Algorithm~\ref{alg:kswap} is $k$-swap stable.
	\end{theorem}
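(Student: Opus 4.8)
The plan is to prove the statement in two stages: first that the algorithm actually computes \emph{a} generalization (the loop terminates and maintains the invariant that $\phi$ is a generalization), and then, by contradiction, that the returned generalization admits no $k$-swap extension. For the first stage, note that the initial value $\phi=\emptyset$ is trivially a generalization, and that condition~(4) guarantees that every assignment $\phi\gets(\phi\setminus\phi_s)\cup\phi_G\cup\{(A,A')\}$ replaces $\phi$ by a generalization; hence $\phi$ is a generalization throughout. Moreover each executed iteration strictly increases $|\phi|$: by condition~(3) we have $|\phi_G|=|\phi_s|$, so the new mapping has size $|\phi|-|\phi_s|+|\phi_G|+1=|\phi|+1$. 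Since any generalization $\phi\subseteq G_1\times G_2$ is an injective mapping and thus has at most $\min(|G_1|,|G_2|)$ elements, the loop terminates after finitely many iterations, returning a generalization $\phi$.

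For the second stage, suppose towards a contradiction that the returned $\phi$ is \emph{not} $k$-swap stable. By Definition~\ref{def-k-swap-stable} there is a $k$-swap $\phi'$ of $\phi$ together with a strictly larger generalization $\hat\phi\supset\phi'$. Since $|\hat\phi|>|\phi'|=|\phi|$, we have $|\hat\phi\setminus\phi|\ge|\hat\phi|-|\phi|>0$, so we may pick a pair $(A,A')\in\hat\phi\setminus\phi$; as $\hat\phi$ is a generalization, $A$ and $A'$ are variants and $(A,A')\in\gen(G_1,G_2)\setminus\phi$. I will show that $(A,A')$ can be added by a legal iteration of the loop, contradicting the fact that the loop had terminated. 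Define $\phi_s:=\phi\setminus\hat\phi$; since $\phi'\subseteq\hat\phi$ we have $\phi\setminus\hat\phi\subseteq\phi\setminus\phi'$, and because $\phi'$ is a $k$-swap of $\phi$ this gives $|\phi_s|\le|\phi\setminus\phi'|\le k$, establishing~(2). A short count shows $|\hat\phi\setminus(\phi\cup\{(A,A')\})|=|\hat\phi|-|\phi\cap\hat\phi|-1\ge|\phi|-|\phi\cap\hat\phi|=|\phi\setminus\hat\phi|=|\phi_s|$, so we may choose $\phi_G\subseteq\hat\phi\setminus(\phi\cup\{(A,A')\})$ with $|\phi_G|=|\phi_s|$, giving~(3) together with the required disjointness of $\phi_G$ from $\phi\cup\{(A,A')\}$.

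It remains to verify conditions~(4) and~(1) for this choice. For~(4), observe that $(\phi\setminus\phi_s)\cup\phi_G\cup\{(A,A')\}=(\phi\cap\hat\phi)\cup\phi_G\cup\{(A,A')\}$ is a subset of $\hat\phi$, and every subset of a generalization is again a generalization (restrict the witnessing renaming, which stays injective); hence the result is a generalization. For~(1) I use the fact that, because $\{(A,A')\}$ is a single pair, $\phi\enforce\{(A,A')\}$ retains exactly those pairs of $\phi$ that are compatible with $(A,A')$, so that $\phi\setminus(\phi\enforce\{(A,A')\})$ consists of the pairs of $\phi$ that are \emph{incompatible} with $(A,A')$. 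Any such incompatible pair cannot belong to $\hat\phi$ (otherwise it would lie, together with $(A,A')$, inside the generalization $\hat\phi$, and would therefore be compatible), so it belongs to $\phi\setminus\hat\phi=\phi_s$; thus $\phi_s\supseteq\phi\setminus(\phi\enforce\{(A,A')\})$, which is~(1). All four conditions and the membership requirements hold, so the loop should have performed one more iteration --- the desired contradiction.

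The step I expect to be the main obstacle is precisely the characterization of the enforcement operator used for condition~(1): that $\phi\enforce\{(A,A')\}$ keeps \emph{all} pairs of $\phi$ that are individually compatible with $(A,A')$, equivalently that this set of pairs is the unique largest subset $M$ of $\phi$ forming a generalization with $(A,A')$. This is not automatic from the definition, which only posits a largest such subset and leaves open whether it is unique. I would prove it by exploiting that $\phi$ is itself a generalization, witnessed by a single injective renaming $\rho_\phi$. Writing $\tau$ for the partial renaming forced by $(A,A')$, one checks that the union of $\tau$ with $\rho_\phi$ restricted to the variables of all individually compatible pairs is still well defined and injective: well-definedness and injectivity can only fail between $\tau$ and a $\rho_\phi$-image, and either failure would already contradict the individual compatibility of the pair responsible. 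This lifting of pairwise to joint compatibility is exactly where the injectivity of renamings (as opposed to arbitrary $\theta$-substitutions) is essential, and it is the one place where genuine care is required; the remaining verifications are elementary set-theoretic bookkeeping.
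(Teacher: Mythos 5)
Your proof is correct and follows essentially the same strategy as the paper's: assume the output is not $k$-swap stable and turn the witnessing $k$-swap extension into a valid selection of $(A,A')$, $\phi_s$ and $\phi_G$ for one more loop iteration, contradicting termination of the repeat loop. You go further than the paper in two useful ways --- establishing termination and the loop invariant, and explicitly discharging condition (1) by proving that $\phi\enforce\{(A,A')\}$ is well defined and retains exactly the individually compatible pairs --- points the paper's much terser argument leaves implicit.
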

	\begin{proof}
		Given goals $G_1$, $G_2$ and constant $k\in\mathbb{N}$, Algorithm~\ref{alg:kswap} can be seen as computing a sequence of generalizations $\phi^0,\ldots,\phi^n$ where each $(\phi^i)$ represents the value of $\phi$ at the end of the $i$-th loop iteration. The generalization $\phi$ is then the final value in this sequence, i.e. $\phi=\phi^n$. 
		
		The proof is by contradiction. Suppose that $\phi=\phi^n$ is not k-swap stable. By definition, this means that there exists a k-swap extension $\phi_k$ of $\phi$ such that $|\phi_k|>|\phi|$ and $\phi_k\supset\phi'$, with $\phi'$ a k-swap of $\phi$. 
		Consequently, there exist generalizations $\phi_s$, $\phi_s'$ and $\phi_r$ such that $\phi'=(\phi\setminus\phi_s)\cup\phi_s'$ and $\phi'=\phi_k\setminus\phi_r$, with $|\phi_s|=|\phi_s'|\le k$ and $|\phi_r|\ge 1$. 
		Then, by taking $\phi_G=\phi_s'$ and $(A,A')\in\phi_r$ the conditions of in the algorithm are satisfied, contradicting the fact that the algorithm's execution would end with $\phi^n$. 
	\end{proof}
	
	For a given value of $k$, Algorithm~\ref{alg:kswap} computes thus a $k$-swap stable generalization, at least if an exhaustive search is performed in each round of the repeat loop in order to find a couple $(\phi_s,\phi_G)$ that allows to transform $\phi$ into a strictly larger generalization $(\phi\setminus\phi_s)\cup\phi_G\cup\{(A,A')\}$. Even if this exhaustive search is implemented, it is not hard to see that for a given and constant value of $k$, the algorithm executes in time $\mathcal{O}(M^{ck})$, where $c$ is a constant and $M$ proportional to $|\gen(G_1,G_2)|$. Note how the exponent depends on $k$, which is a constant parameter unrelated to the size of the goals to generalize (the input). Therefore the execution time of the algorithm is polynomially bounded. 
	
	By aiming to improve some initial solution at each iteration, Algorithm~\ref{alg:kswap} is an anytime algorithm: as such, in concrete implementations one could retrieve the $n$-th generalization computed by Algorithm~\ref{alg:kswap} when it is interrupted at iteration $(n+1)$. The $n$-th generalization may not be k-swap stable, but it is assured to be a generalization of size $n$.
	Also note that being inherently non-deterministic, the algorithm is by no means guaranteed to find the largest, or most convenient, $k$-swap stable generalization. In order to somewhat steer the search towards a \textit{promising} generalization, we introduce the concept of a \textit{quality estimator}, i.e. a function that associates a value in $\mathbb{R}$ to any couple of matching literals $(A,A')\in\gen(G_1,G_2)$. The general idea behind this function being that the higher the value associated to a couple $(A,A')$, the higher the probability that $(A,A')$ is an element of a maximal common generalization. 
	
	\begin{definition}[Quality estimator]
		Given goals $G_1$ and $G_2$, a \emph{quality estimator} is a function $\Omega^{G_1, G_2} : \gen(G_1, G_2) \mapsto \mathbb{R}$. When goals $G_1$ and $G_2$ are unambiguously identified, we will simply write $\Omega$. 
	\end{definition}
	
	A typical implementation of Algorithm~\ref{alg:kswap} will thus loop through the potential couples $(A,A') \in \gen(G_1,G_2)$ in descending order of their $\Omega$-values. If $\Omega$ is a perfect oracle -- in the sense that it associates maximal values to those couples that constitute an mcg -- then, obviously, Algorithm~\ref{alg:kswap} computes this mcg. In practice, however, $\Omega$ will be a heuristic. In our implementation, which we elaborate on in Section~\ref{section-prototype}, we use the following heuristic $\Omega$-function.
	
	\begin{example}\label{ex-omega}
		An intuitive yet efficient quality estimator is the function that maps a couple $(A,A')$ to the multiplicative inverse of the number of conflicts the couple has with other couples (i.e. the involved renamings being incompatible). Let $\mathfrak{c}$ denote the set $\big\{(B,B')\in gen(G_1, G_2) | (B,B')\neq(A,A')\wedge \{(A,A'), (B,B')\} \mbox{ is not a generalization}\big\}$. We then define $\Omega^{G_1, G_2}(A,A')$ as $(|\mathfrak{c}|+1)^{-1}$. The "+1" term is only meant to avoid division by zero.
	\end{example}
	
	A quality estimator acts as an indicator of the interest of having a couple $(A,A')$ into the generalization $\phi$ under construction. It will naturally segment the couples in $\gen(G_1,G_2)$ into subsets with different quality ($\Omega$) values, guiding our algorithm as to which couples should or should not be part of the generalization. Now, inside the main loop of Algorithm~\ref{alg:kswap}, the same estimator function can be used to guide the search for the $k$-swap - in particular the mappings $\phi_s$ and $\phi_G$ - rather than computing these by exhaustive search. 
	Algorithm~\ref{alg:selection} provides such a concrete search procedure based on $\Omega$. Given a couple of atoms $(A,A')$ and a generalization $\phi$ under construction, the algorithm searches for a suitable $\phi_s$ and $\phi_G$ that could be used as a $k$-swap to continue the construction of the generalization by Algorithm~\ref{alg:kswap}.
	
	\begin{algorithm}[hbtp]
		\caption{Selecting $\phi_s$ and $\phi_G$ for a given $(A,A')$}
		\label{alg:selection}
		\begin{algorithmic}
			\State $GS \gets \{\}$
			\State $BS \gets \{\}$
			\State $\phi_G \gets\{\}$
			\State $\phi_s \gets \phi\setminus\phi\enforce\{(A,A')\}$
			\State $S \gets \gen(G_1, G_2) \setminus\phi\enforce\{(A,A')\}$
			\While{$|\phi_G| < |\phi_s| \mbox{ and } |\phi_s|\le k$}
			\While{$|\phi_G| < |\phi_s| \mbox{ and } \neg (\compatible_{\phi\setminus\phi_s\cup\phi_G}(S) = \{\} \mbox{ and } GS = \{\})$}
			\State For all $p\in \max^W_\Omega(\compatible_{\phi\setminus\phi_s\cup\phi_G}(S)) : push(GS, (\phi_G\cup p, S\setminus\{p\}))$
			\State $(\phi_G, S)\gets pop(GS)$
			\EndWhile
			\If{$|\phi_G|<|\phi_s|$}
			\State For all $p\in \min^W_\Omega(\phi\setminus\phi_s) : enter(BS, \phi_s\cup\{p\})$
			\If{$BS\neq\{\}$}
			\State $\phi_s\gets exit(BS)$
			\State $\phi_G\gets\{\}$
			\State $S\gets \gen(G_1, G_2)\setminus(\phi\cup \{(A,A')\})$
			\Else
			\State return $\bot$
			\EndIf
			\EndIf
			\EndWhile
			\If{$|\phi_G|=|\phi_s|$}
			\State return $\phi_s, \phi_G$
			\Else
			\State return $\bot$
			\EndIf
		\end{algorithmic}
	\end{algorithm}
	
	The search process of Algorithm~\ref{alg:selection} is conceptually analogous to an A* search. The mapping $\phi_s$ is initialized with the part of $\phi$ that is incompatible with the pair of atoms $(A,A')$ we wish to enforce into the generalization. Its replacement mapping $\phi_G$ is initially empty and the algorithm subsequently searches to construct a sufficiently large $\phi_G$ (the inner while loop). During this search, $S$ represents the set of candidates, i.e. couples from $\gen(G_1,G_2)$ that are not (yet) associated to the generalization, and $\compatible_{\phi\setminus\phi_s\cup\phi_G}(S)$ represents the subset of $S$ of which each element could be added to $\phi\setminus\phi_s\cup \phi_G$ such that the result is a generalization (i.e. there is no conflict in the associated renamings). In order to explore different possibilities by backtracking, the while loop manipulates a stack $GS$ that records alternatives for $\phi_G$ with the corresponding set $S$ for further exploration. 
	
	Now, in order to steer the search process, only candidate couples having an $\Omega$-value within the best $W$ are considered for further exploration. We therefore define $\max^W_\Omega(U)$ (resp. $\min^W_\Omega(U)$) as denoting the subset of $U$ composed of those couples that have an associated $\Omega$-value among the $W$ highest (resp. lowest) qualities of elements in $U$. In this, $W$ is a parameter of the algorithm that can be used to control the degree of backtracking. If $W=\infty$ backtracking is performed over \textit{all} possible alternatives (exhaustive search), whereas when $W=1$ only the couples with the best (or worst) $\Omega$-value are considered for use. Note that even when exhaustive search is used ($W=\infty$), the algorithm considers the most promising couples (those with the highest $\Omega$-values) first. 
	
	If the search for $\phi_G$ was without a satisfying result (i.e. no $\phi_G$ is found equal in size to $\phi_s$), the algorithm continues by removing another couple from $\phi$ (thereby effectively enlarging $\phi_s$). The rationale behind this action is that there might be a couple in $\phi$ that is ``blocking'' the couples in $S$ from addition to $\phi$. In order to steer the removal of such potentially blocking couples, a couple from $\min^W_\Omega(\phi\setminus\phi_s)$ is selected, and alternatives (those having an $\Omega$-value among the $W$ worst) are recorded in a queue ($BS$). Note the use of a queue (and its associated operations \textit{enter} and \textit{exit}) as opposed to the stack $GS$.
	
	The process is repeated until either $|\phi_G|=|\phi_s|$ in what case we have found a suitable $k$-swap, or until $|\phi_s|>k$ in what case we have not, and the algorithm returns $\bot$. 
	
	%%%%%%%%%%%%%%%%%%%%%%%%%%%%%%%%%%%%%%%%%%%%%%%%%%%
	%%%%%%%%%%%%%%%%%%%%%%%%%%%%%%%%%%%%%%%% Prototype
	%%%%%%%%%%%%%%%%%%%%%%%%%%%%%%%%%%%%%%%%%%%%%%%%%%%
	\section{Prototype evaluation}\label{section-prototype}
	
	In order to experimentally evaluate both the result and performance of our approach, we have made a prototype implementation of Algorithms~\ref{alg:kswap} and~\ref{alg:selection} in Prolog\footnote{Source code is available at \url{https://github.com/Gounzy/CLPGeneralization}.}. The implementation uses the quality function $\Omega$ defined in Example~\ref{ex-omega}. Our evaluation consist in computing $k$-swap stable genera\-lizations for a considerable set of test cases (pairs of goals) that have been generated randomly according to certain criteria. In particular, we have defined 6 problem classes, the characteristics of which are represented in Table~\ref{table:classes}. 
	
	\begin{table}[hbtp]
		\caption{Classes of randomly generated anti-unification problems}
		\label{table:classes}
		\begin{tabular}{ccccccc}
			\hline
			class & Variables & Literals & Variable combinations & Literal matchings\\ 
			\hline
			1 & 5-10 & 5-15 & $\le$ 60,000 & $\le$ 40,000 \\
			2 & 6-10 & 10-15	& 60,001-360,000 & 40,001-210,000 \\
			3 & 9-10 & 15-20 & 360,001-3,600,000 & 210,001-9,000,000
			\\
			4 & 10-12 & 15-20 & 3,600,001-17,000,000 & 9,000,001-17,000,000 \\
			5 & 10-15 & 15-20 & 17,000,001-175,000,000 & 17,000,001-175,000,000 \\
			6 & 10-18 & 15-22 & 175,000,001-1,750,000,000 & 175,000,001-1,750,000,000\\
			\hline
		\end{tabular}
		\vspace{-0.5\baselineskip}
	\end{table}
	
	Table~\ref{table:classes} provides, for each problem class, a row containing the admissible (ranges of) values that were used when generating a test case $(G_1,G_2)$ belonging to that class. The columns 'Variables' and 'Literals' denote, respectively, the number of variables and literals that are allowed in the generated goals. The column 'Variable combinations' denotes the total number of mappings that must exist between the variables of $G_1$ and the variables of $G_2$. In a similar vein, the column 'Literal matchings' denotes the number of subsets of $\gen(G_1,G_2)$ (excluding those mapping a single literal more than once), as such representing an upper bound on the number of potential generalizations of $G_1$ and $G_2$. Note that these parameters (in particular the latter two) guarantee that each test case exhibits a certain complexity for the anti-unification algorithm and the parameter values of each class are chosen in such a way to have ascending complexities both with respect to the number variable combinations and literal matching possibilities that could potentially need to be explored by the algorithm. The generated literals are all atoms that are built using three test predicates $f/1, g/2$ and $h/3$. Real-life applications would typically harbor a higher number of literal symbols, but less symbols tend to increase the anti-unification complexity of the generated goals, making them more of a challenge for our algorithm. Also note that although being built on a CLP formalism, the test instances are by no means intended to depict real-life Constraint Satisfaction Problems (CSP). They rather represent batches of anti-unification instances as could arise in semantic clones detection~\cite{DBLP:conf/ppdp/MesnardPV16} where one typically needs a fast and efficient anti-unification algorithm capable of handling a multitude of goals in a reasonable time. 
	
	\begin{example}
		The following is an example of a generated test case, verifying the constraints of class 2 in Table~\ref{table:classes}. It presents 72,000 anti-unification possibilities and 181,440 possible variable combinations.
		$G_1 = \{f(A),f(C),f(F),g(C,G),g(I,E),g(I,F),h(A,A,C),h(B,F,D),h(C,A,A),h(D,E,C),\\h(F,A,C),h(F,E,H),h(G,G,B),h(G,I,I)\}\\
		G_2 = \{f(J),f(K),f(P),g(N,L),g(N,N),g(O,J),h(K,M,J),h(K,P,M)\}$
	\end{example}
	
	Table~\ref{table:results} summarizes the results of our experimental evaluation. Four incarnations of our algorithm were tested, computing $k$-swap stable generalizations for $k=0$, $k=2$, $k=4$ and $k=\infty$. Each incarnation is represented in the table by, respectively, $\Omega_0$, $\Omega_2$, $\Omega_4$ and $\Omega_\infty$. For each incarnation, we have fixed $W = 1$ in order to severely limit backtracking to alternatives having the same $\Omega$-value. While minimal backtracking is of course advantageous for the execution time, it is at the same time the most demanding setting when testing the accuracy and relevance of the $k$-swap stability concept. To compare the execution times, we have also implemented two naive brute-force algorithms, denoted in the table by $\mcg_{ER}$ and $\mcg_{EG}$, that compute an mcg either by exhaustively enumerating all possible renamings ($\mcg_{ER}$) or all possible literal matchings ($\mcg_{EG}$) and retaining the largest generalization that was thus found. 
	
	For each of the 6 problem classes, one thousand examples were generated verifying the constraints of the class. Each algorithm was executed over all 1000 examples and Table~\ref{table:results} displays their average execution time (in milliseconds). As expected, the execution time is higher for larger values of $k$, and grows with the complexity of the problems that are dealt with. However, for all classes but the simplest, the execution time of our algorithm (even in the case where $k=\infty$) stays well below the execution time of the brute-force algorithms. For the more complex problem classes, the difference amounts to several orders of magnitude and remains more than manageable (in the millisecond range), even with $k=\infty$. 
	Only for the simplest of test cases (problem class 1) our algorithm shows an overhead caused by trying out some k-swaps more than once.
	As a side note, between the two brute-force algorithms $\mcg_{ER}$ is in general the slowest because it has in general an enormous amount of variable mappings to explore, while $\mcg_{ER}$ is more often able to cut exploration paths when encountering incompatible literal matchings during its mcg construction process.

	In order to test the accuracy of our abstraction, for each example we compared the size of the computed $k$-swap stable generalization with the size of computed by the naive algorithms. For each problem class and algorithm incarnation, Table~\ref{table:results} displays the average size of the computed $k$-swap stable generalization expressed as a percentage of the size of the corresponding mcg. As can be expected, the accuracy grows for larger values of $k$ but is, on average, never below 80\% of the mcg even for the most simple and greedy incarnation of our algorithm ($\Omega_0$). Note that in the case of $\Omega_\infty$, the average accuracy is below 100\% while in theory 
	$\Omega_\infty$ should compute an mcg. This is of course due to the fact that $W=1$, meaning that not enough backtracking is performed in order to compute an mcg in all cases. These are nevertheless quite promising results.
	
	While the use of average times and accuracy might be criticized, it is noteworthy that for all problem classes and algorithms the standard deviation between the execution times was less than 20\% of the average value and less than 10\% in the case of the accuracy.
	
	\begin{table}[hbtp]
		\caption{Average execution times (in milliseconds) and average size relative to an mcg (in \%)}
		\label{table:results}
		\begin{tabular}{lrrrrrrrrrr}
			\hline
			class & $\mcg_{ER}$ & $\mcg_{EG}$ & \multicolumn{2}{c}{$\Omega_0$} & \multicolumn{2}{c}{$\Omega_2$} & \multicolumn{2}{c}{$\Omega_4$} & \multicolumn{2}{c}{$\Omega_\infty$} 
			
			\\ \hline
			
			1 & 4.66 & 1.11 & 0.48 & 97.4\% & 1.39 & 99.4\% & 2.05 & 99.9\% & 3.13 & 99.9\%
			\\
			2 & 639.15 & 154.56 & 7.76 & 84,8\% & 28.31 & 96.2\% & 56.55 & 98.3\% & 63.81 & 98.6\%
			\\
			3 & 4240 & 701.57 & 10.88 & 83,8\% & 43.55 & 95.3\% & 91.39 & 98.0\% & 104.06 & 98.2\%
			\\
			4 & 11800 & 2890 & 18.38 & 81,6\% & 71.26 & 93.9\% & 156.73 & 97.3\% & 206.22 & 97.4\%
			\\
			5 & 26150 & 7640 & 24.72 & 84.1\% & 91.07 & 94,2\% & 196.56 & 96.5\% & 249.33 & 97.5\%
			\\
			6 & 431260 & 37930 & 46.84 & 80.4\% & 127.14 & 93.4\% & 271.94 & 95.7\% & 377.2 & 96.9\%
			\\
			\hline
		\end{tabular}
		\vspace{-1\baselineskip}
	\end{table}
	In conclusion, these simple experiments show that our abstraction performs quite well: although it will in general not compute the \textit{maximal} common generalization, it will find relatively large generalizations in a tractable time (generally even impressively fast when compared to a brute-force approach), even when the overall anti-unification complexity is high. 
	
	%%%%%%%%%%%%%%%%%%%%%%%%%%%%%%%%%%%%%%%%%%%%%%%%%%%
	%%%%%%%%%%%%%%%%%%%%%%%%%%%%%%%%%%%%%%%% Conclusion
	%%%%%%%%%%%%%%%%%%%%%%%%%%%%%%%%%%%%%%%%%%%%%%%%%%%
	\section{Conclusions and future work}\label{section-conclusion}
	
	In this work, we have established a theory of anti-unification (or generalization) in the context of Constraint Logic Programming. When goals are considered as sets of atoms and constraints, the problem of computing their maximal common generalization becomes an intractable problem, a result that we have formally proved. We have introduced an abstraction of the maximal common generalization, namely a $k$-swap stable generalization, that can be computed in polynomial time. We have defined a skeleton algorithm that is parametric by $k$ and that allows to steer the generalization by a heuristic function $\Omega$. We have shown our algorithm to provide promising results on a set of randomly created test cases. Its parameters should be tuned to achieve the best trade-off between output mcg size (by increasing $k$ and/or $W$) and time performance (by decreasing $k$ and/or $W$), depending on the application at hand. Future work should investigate the exact interaction between parameters $k$ and $W$: when not able to find an mcg, the responsible parameter is, in our current prototype, not clearly identified. While the heuristic function $\Omega$ we have used in our prototype implementation seems to perform quite well and results in overall large generalizations, other heuristic functions can be envisioned, possibly in function of the application at hand.

	In further work, we also aim at integrating the notions developed in this paper into a framework for clone detection or algorithmic equivalence recognition such as~\cite{DBLP:conf/ppdp/MesnardPV16} that uses CLP clauses as an intermediate program representation. Having an efficient generalization algorithm is a necessary ingredient that allows to compute the similarity between program fragments. We expect that our generalization concept and algorithm can be integrated in such a framework such that it would allow to steer the underlying transformation process. In that context, we intend to conduct a more in-depth empirical study of the two algorithms presented in Section~\ref{section-algo}. We will in particular investigate the complexity of Algorithm~\ref{alg:selection} that in practice depends on the branching factor induced by the quality estimator at hand. 
	
	Direct applications of our generalization algorithm include other transformational approaches on CLP programs, in particular those where computing generalizations is a means to obtain finiteness of the transformation, an example being partial deduction of CLP programs. 
	Our anti-unification theory is a general and domain-independent framework. As such, it can likely be incarnated and enforced by incorporating and integrating domain-specific widening operators, which is another topic for future work. Moreover, depending on the context, generalizations can be considered maximal or most-specific based on other criteria than just cardinality, a simple example being the amount of literal arguments captured in the common generalization. This is especially relevant when arities can widely vary from one literal to another, and constitutes a topic for future research on other generalization strategies.
	
	%%%%%%%%%%%%%%%%%%%%%%%%%%%%%%%%%%%%%%%%%%%%%%%%%%%
	%%%%%%%%%%%%%%%%%%%%%%%%%%%%%%%%%%%%%%%% Bibliographie
	%%%%%%%%%%%%%%%%%%%%%%%%%%%%%%%%%%%%%%%%%%%%%%%%%%%
	
	\section*{Acknowledgments} 
	We thank anonymous reviewers for their constructive input and remarks.
	
	\bibliographystyle{acmtrans}
	\bibliography{auclpx}
	
\end{document}